\newcommand{\LLhb}{\mkern1.5mu\mathchar'26\mkern-7.25mu h} 
\newcommand{\LLhb}{\hbar} 
\newcommand		{\LLhd}		{h^d}
\newcommand		{\LLNN}		{\mathbb N}			
\newcommand		{\LLRR}		{\mathbb R}			
\newcommand		{\LLRd}		{\LLRR^d}
\newcommand		{\LLRdd}	{\LLRR^{2d}}
\newcommand		{\LLL}		{\mathcal L}		
\newcommand		{\com}[1]		{\left[{#1}\right]}		
\newcommand		{\LLn}[1]			{\left|{#1}\right|}
\newcommand		{\LLnrm}[1]		{\left\|{#1}\right\|}
\newcommand		{\LLsnrm}[1]		{\lVert #1\rVert}
\newcommand		{\LLbnrm}[1]		{\big\lVert #1\big\rVert}
\newcommand		{\LLNrm}[2]		{\LLnrm{#1}_{#2}}
\newcommand		{\LLsNrm}[2]		{\LLsnrm{#1}_{#2}}
\newcommand		{\LLbNrm}[2]		{\LLbnrm{#1}_{#2}}
\newcommand		{\LLd}		{\mathop{}\!\mathrm{d}}		
\DeclareMathOperator{\LLtr}		{Tr}			
\newcommand		{\LLTr}[1]		{\LLtr\!\left( #1 \right)} 	
\newcommand		{\LLintd}			{\int_{\LLRd}}
\newcommand		{\LLintdd}		{\int_{\LLRdd}}
\newcommand		{\LLcC}			{\mathcal{C}}
\newcommand		{\LLop}		{{\boldsymbol{\rho}}}	
\newcommand		{\LLopp}		{{\boldsymbol{p}}}
\newcommand		{\LLDh}		{\pmb{\nabla}}	
\newcommand		{\LLDhx}[1]	{\LLDh_{\!x} #1}			
\newcommand		{\LLDhv}[1]	{\LLDh_{\!v} #1}		
\begin{document}

\title*{Uncertainty Relation for the Wigner--Yanase Skew Information and Quantum Sobolev Inequalities}
\titlerunning{Uncertainty relation for the Wigner--Yanase Skew information} 
\author{Laurent Lafleche} 
\institute{Laurent Lafleche \at Unité de Mathématiques pures et appliquées, École Normale Supérieure de Lyon, 69364 Lyon, France \at \email{laurent.lafleche@ens-lyon.fr} 
}

\maketitle

\abstract{
	This note explores uncertainty inequalities for quantum analogues of the Fisher information including the Wigner--Yanase skew information, and their connection to the quantum Sobolev inequalities proved by the author in (\textit{Journal of Functional Analysis}, 286 (10) 2024). Some additional inequalities concerning commutators are derived and others are left as open problems.
}

\keywords{Operator inequalities; trace inequalities; Sobolev inequalities; semiclassical approximation.}
\\
{{\bf MSC2020:} 81S07; 46E35; 81S30 (47A30).}

\section{Quantum Skew Information and Variance}\label{sec:skew}

	Let $\LLop$ be a density operator, that is, a positive compact operator, acting on the space $L^2(\LLRd)$ of square-integrable functions, with $\LLTr{\LLop}=1$. The \textit{quantum variance} of an observable $K$ with respect to $\LLop$ is given by
	\begin{equation*}
		\sigma_K(\LLop)^2 = \LLtr(\LLn{K}^2\LLop) - \LLn{\LLTr{K\,\LLop}}^2 ,
	\end{equation*}
	where $\LLn{K}^2 = K^*K$. It satisfies the famous Heisenberg(--Robertson) uncertainty inequality
	\begin{equation*}
		\sigma_A(\LLop) \, \sigma_B(\LLop) \geq \frac{1}{2} \LLn{\LLTr{\com{A,B}\LLop}}
	\end{equation*}
	which reduces to $\sigma_x(\LLop) \, \sigma_\LLopp(\LLop) \geq \frac{d\LLhb}{2}$ if $x$ is the operator of multiplication by $x\in\LLRd$ and $\LLopp = -i\LLhb \nabla$. Here $\com{A,B} = A\cdot B - B\cdot A$ is the commutator of $A$ and $B$.

	A quantum analogue of the Fisher information is the \textit{Wigner--Yanase skew information}~\cite{wigner_information_1963} with respect to a self-adjoint operator $K$, which is defined by
	\begin{equation}\label{eq:skew}
		I_K(\LLop) = \frac{1}{2} \LLTr{\LLn{\com{K,\sqrt{\LLop}}}^2}.
	\end{equation}
	Another quantum analogue of the Fisher information, which makes it possible to obtain a quantum analogue of the Cramér--Rao inequality~\cite{helstrom_minimum_1967} when the parameter change is implemented through a Hamiltonian dynamics, is the \textit{symmetric logarithmic derivative Fisher information} (see, e.g.,~\cite{holevo_probabilistic_2011, braunstein_statistical_1994, hayashi_two_2002}) defined by
	\begin{equation}\label{eq:skew_J}
		J_K(\LLop) = \LLTr{\LLn{L_K(\LLop)}^2\LLop} \quad \text{ with } \quad \frac{1}{2}\left(L_K(\LLop)\LLop + \LLop L_K(\LLop)\right) = \frac{1}{i}\com{K,\LLop} .
	\end{equation}
	If $\LLop$ is a compact positive operator which can be diagonalized in the form
	\begin{equation*}
		\LLop = \sum_{j\in J} \lambda_j \,\ket{\psi_j}\bra{\psi_j}
	\end{equation*}
	for some set $J\subset \LLNN$, where $(\psi_j)_{j\in J}$ are the eigenvectors of $\LLop$ and $(\lambda_j)_{j\in J}$ the corresponding eigenvalues, then the above quantities can be written
	\begin{equation*}
		I_K(\LLop) = \frac{1}{2}\sum_{j\in J} \Big|\sqrt{\lambda_j}-\sqrt{\lambda_k}\Big|^2 \LLn{\braket{\psi_j|K\psi_k}}^2
	\end{equation*}
	and
	\begin{equation*}
		J_K(\LLop) = 4\sum_{j,k\in J} \frac{\LLn{\lambda_j-\lambda_k}^2}{\lambda_j+\lambda_k} \LLn{\braket{\psi_j|K\psi_k}}^2
	\end{equation*}
	where the sum is over all the $j$ and $k$ such that $\lambda_j+\lambda_k\neq 0$.
	
	These quantum informations are smaller than the variance, and more precisely
	\begin{equation*}
		\sigma_K^2(\LLop) \geq \frac{1}{4} \,J_K(\LLop) \geq I_K(\LLop) \, ,
	\end{equation*}
	with equality when $\LLop$ is a pure state, i.e. a rank-one projection. As proved in~\cite{kosaki_matrix_2005}, the Heisenberg-like uncertainty inequality $\sqrt{I_A(\LLop)\,I_B(\LLop)} \geq \frac{1}{2} \LLn{\LLTr{\com{A,B}\LLop}}$ does not hold for the Wigner--Yanase skew information, as was wrongly claimed in~\cite{luo_informational_2004} and later corrected in \cite{luo_correction_2005}. Uncertainty principles for modifications of the skew information were investigated in \cite{luo_heisenberg_2005, yanagi_wigneryanasedyson_2010, yang_generalized_2022}. In the case of the symmetric logarithmic derivative Fisher information~\eqref{eq:skew_J}, the analogue of the Cramér--Rao inequality follows directly from the Cauchy--Schwarz inequality for the trace~\cite{helstrom_minimum_1967, frowis_tighter_2015} and can be viewed as an improved Heisenberg uncertainty principle, since it reads
	\begin{equation*}
		\sigma_A(\LLop)\, \sqrt{\tfrac{1}{4}J_B(\LLop)} \geq \frac{1}{2}\LLn{\LLTr{\com{A,B}\LLop}} .
	\end{equation*}
	
	The problem of finding a lower bound to $I_A(\LLop)\,I_B(\LLop)$ was solved recently by the author in~\cite{lafleche_quantum_2024} in the case $\left(A,B\right) = \left(x,\LLopp\right) = \left(x,-i\LLhb \nabla\right)$ in dimension $d\geq 2$, and involves higher-order Schatten norms instead of the trace. Recalling the definition of Schatten norms of order $p\in[1,\infty)$
	\begin{equation*}
		\LLNrm{\LLop}{p} = \left(\LLTr{\LLn{\LLop}^p}\right)^{1/p}
	\end{equation*}
	and setting $\LLNrm{\LLop}{\infty}$ to be the operator norm, the uncertainty inequality for the Wigner–Yanase information takes the following form.
	\begin{theorem}[\cite{lafleche_quantum_2024}, Corollary~2.2]\label{thm:dim_general}%
		Let $d\geq 2$ and $\LLop$ be a compact operator on $L^2(\LLRd)$. Then
		\begin{equation*}
			\frac{1}{4}\sqrt{J_x(\LLop)\,J_\LLopp(\LLop)} \geq \sqrt{I_x(\LLop)\,I_\LLopp(\LLop)} \geq \frac{\LLhb}{8\pi C_d} \LLNrm{\LLop}{\frac{d}{d-1}}
		\end{equation*}
		where $\LLcC^\textnormal{S}_{1,2} \leq C_d \leq \LLcC^\textnormal{S}_{1,2} + \frac{1}{\sqrt{8\pi}}$ with $\LLcC^\textnormal{S}_{1,2}$ the optimal constant in the Sobolev inequality $\dot H^1(\LLRdd)\to L^p(\LLRdd)$, i.e. $(\LLcC^\textnormal{S}_{1,2})^2 = \left(4\pi\right)^{-1/(2d)} \frac{\Gamma(d+1/2)^{1/d}}{d\left(d-1\right)\pi}$.
	\end{theorem}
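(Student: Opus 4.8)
The first inequality is immediate from the hierarchy recorded above: applying $\tfrac14 J_K(\LLop)\ge I_K(\LLop)$ with $K=x$ and with $K=\LLopp$ and multiplying the two gives $\tfrac1{16}J_x(\LLop)\,J_\LLopp(\LLop)\ge I_x(\LLop)\,I_\LLopp(\LLop)$, and the claim follows on taking square roots. All the difficulty therefore lies in the lower bound $\sqrt{I_x(\LLop)\,I_\LLopp(\LLop)}\ge \frac{\LLhb}{8\pi C_d}\LLNrm{\LLop}{\frac d{d-1}}$.

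The plan is to read the two skew informations as Dirichlet energies of a single function on the phase space $\LLRdd$. Writing $R=\sqrt{\LLop}$ with integral kernel $R(x,y)$, the operator $\com{x_j,R}$ has kernel $(x_j-y_j)R(x,y)$ and $\com{\partial_j,R}$ has kernel $(\partial_{x_j}+\partial_{y_j})R(x,y)$. Passing to center-of-mass and relative coordinates $u=\tfrac{x+y}2$, $w=x-y$ and taking the Fourier transform in $w$ — that is, forming the Wigner transform $W_R$ of $R$ on $\LLRdd$ — turns multiplication by $w$ into $\nabla_p$ and $\partial_{x_j}+\partial_{y_j}$ into $\partial_{u_j}$, so that by Plancherel
\begin{equation*}
	2\,I_x(\LLop) = c_{d,\LLhb}\LLintdd \LLn{\nabla_p W_R}^2 \LLd u\,\LLd p, \qquad 2\,I_\LLopp(\LLop) = c_{d,\LLhb}\LLintdd \LLn{\nabla_u W_R}^2 \LLd u\,\LLd p,
\end{equation*}
with the \emph{same} prefactor $c_{d,\LLhb}=\LLhb^2(2\pi\LLhb)^{-d}$. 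Thus $\sqrt{I_x(\LLop)\,I_\LLopp(\LLop)}$ is, up to this explicit constant, the geometric mean of the momentum- and position-Dirichlet energies of $W_R$.

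Next I would invoke the classical Sobolev inequality $\dot H^1(\LLRdd)\to L^{p}(\LLRdd)$ in dimension $2d$, whose conjugate exponent is precisely $p=\frac{2\cdot 2d}{2d-2}=\frac{2d}{d-1}=:2^*$ and whose sharp constant is $\LLcC^\textnormal{S}_{1,2}$ — one checks via the Legendre duplication formula that the displayed value of $(\LLcC^\textnormal{S}_{1,2})^2$ is exactly the Talenti--Aubin constant in dimension $2d$. Applied to $\LLn{W_R}$ (using $\LLn{\nabla\LLn{W_R}}\le \LLn{\nabla W_R}$ pointwise) it controls $\LLNrm{W_R}{L^{2^*}}^2$ by the full gradient energy $\int \LLn{\nabla_u W_R}^2 + \int\LLn{\nabla_p W_R}^2$. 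To recover the geometric mean rather than the sum, I would precompose with the measure-preserving dilation $(u,p)\mapsto(\lambda u,\lambda^{-1}p)$, which fixes $\LLNrm{W_R}{L^{2^*}}$ but rescales the two energies oppositely, and optimize in $\lambda>0$; this yields $\LLNrm{W_R}{L^{2^*}}^2\le 2(\LLcC^\textnormal{S}_{1,2})^2\sqrt{(\int\LLn{\nabla_u W_R}^2)(\int\LLn{\nabla_p W_R}^2)}$ and hence a lower bound of $\sqrt{I_x(\LLop)\,I_\LLopp(\LLop)}$ by a constant multiple of $\LLNrm{W_R}{L^{2^*}}^2$.

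The remaining, and hardest, step is to replace the phase-space norm $\LLNrm{W_R}{L^{2^*}}$ by the Schatten norm $\LLNrm{R}{2^*}^2=\LLNrm{\LLop}{\frac d{d-1}}$, i.e. to prove $\LLNrm{R}{2^*}\le c\,\LLNrm{W_R}{L^{2^*}}$ at the exponent $2^*\ge2$. This comparison between a Schatten norm of $\LLop$ and an $L^p$ norm of its Wigner transform is exactly the quantum Sobolev inequality of~\cite{lafleche_quantum_2024}: it is an identity (with constant $(2\pi\LLhb)^{-d/2}$) at $p=2$ by Plancherel, but for $p\ne2$ the Wigner function is not pointwise dominated by the spectral data of $\LLop$ and may change sign, so one first replaces $W_R$ by the nonnegative Husimi function $W_R*G_\LLhb$, with $G_\LLhb$ the coherent-state Gaussian at scale $\sqrt\LLhb$, and then interpolates. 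The smoothing by $G_\LLhb$ is what degrades the sharp constant additively and produces the bracketed quantum constant $\LLcC^\textnormal{S}_{1,2}\le C_d\le \LLcC^\textnormal{S}_{1,2}+\frac1{\sqrt{8\pi}}$; tracking the powers of $\LLhb$ and of $2\pi$ through the two steps then assembles the prefactor $\frac{\LLhb}{8\pi C_d}$ and closes the argument.
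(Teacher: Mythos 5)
Your overall architecture — pass to $\sqrt{\LLop}$, identify the two skew informations with the two partial Dirichlet energies, apply a Sobolev inequality at the critical exponent $\frac{2d}{d-1}$ on the $2d$-dimensional phase space, and convert the sum of energies into a geometric mean by the anisotropic dilation $(u,p)\mapsto(\lambda u,\lambda^{-1}p)$ — is exactly the architecture of the paper (it is carried out explicitly, in the same order, in the proof of Theorem~\ref{thm:dim_1}). The first inequality via $\tfrac14 J_K\geq I_K$ and the identification $\LLNrm{\sqrt{\LLop}}{\frac{2d}{d-1}}^2=\LLNrm{\LLop}{\frac{d}{d-1}}$ are also fine.

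The gap is in the step you yourself flag as the hardest one: you propose to apply the \emph{classical} Sobolev inequality to $|W_R|$ on $\LLRdd$ and then transfer back to the operator via $\LLNrm{R}{2^*}\leq c\,\LLNrm{W_R}{L^{2^*}}$. That comparison is not the quantum Sobolev inequality of \cite{lafleche_quantum_2024}, and it is false in general: the Wigner/Weyl correspondence is an isometry between $\LLL^2$ and $L^2(\LLRdd)$ but is \emph{not} bounded between $\LLL^p$ and $L^p(\LLRdd)$ in either direction for $p\neq 2$ (already at $p=1$ there are pure states whose Wigner function is not integrable), so the $L^{2^*}$ norm of $W_R$ does not control the Schatten norm $\LLNrm{R}{2^*}$. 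The Husimi regularization you invoke does not repair this, because the elementary estimate for the Husimi function goes the wrong way ($L^q$ norm of the Husimi function bounded \emph{by} the Schatten norm, not the reverse). What the cited result actually provides is the inequality $\LLNrm{\LLop}{\LLL^q}\leq C_{d}\,\LLNrm{\LLDh\LLop}{\LLL^2}$ directly at the operator level, i.e.\ Schatten norm of $\LLop$ controlled by the Schatten norm of the quantum gradient; its proof is genuinely noncommutative (a phase-space frequency decomposition with semiclassical Bernstein-type bounds, which is where the additive loss $\LLcC^{\textnormal{S}}_{1,2}\leq C_d\leq\LLcC^{\textnormal{S}}_{1,2}+\tfrac{1}{\sqrt{8\pi}}$ comes from) and cannot be reduced to the classical Sobolev inequality by taking Wigner transforms. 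The correct route is therefore to apply that quantum inequality to $\sqrt{\LLop}$, perform the dilation (as a unitary conjugation, which preserves Schatten norms while rescaling $\LLDhx$ and $\LLDhv$ oppositely), and optimize — never leaving the operator side.
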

	
	In dimension $d=1$, one can obtain a slightly weaker result assuming that $\LLop$ is a density operator.
	\begin{theorem}\label{thm:dim_1}
		Let $\LLop$ be a density operator on $L^2(\LLRR)$. Then for any $p\in(1,\infty)$,
		\begin{equation*}
			\frac{1}{4}\sqrt{J_x(\LLop)\,J_\LLopp(\LLop)} \geq \sqrt{I_x(\LLop)\,I_\LLopp(\LLop)} \geq \frac{\LLhb}{8\pi C_{1/p'}^{2p'}} \LLNrm{\LLop}{p}^{p'}
		\end{equation*}
		with $C_s = \frac{1}{\left(8\pi\right)^{s/2}} + \frac{\pi^{\left(1-s\right)/2}}{2^s\sin(\pi s)^{1/2}\Gamma(s)\sqrt{s}}$.
	\end{theorem}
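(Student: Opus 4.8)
\emph{The left inequality} is immediate from the pointwise bound $I_K(\LLop)\le\tfrac14 J_K(\LLop)$ recorded above: applying it with $K=x$ and with $K=\LLopp$ and multiplying gives $I_x(\LLop)\,I_\LLopp(\LLop)\le\tfrac1{16}J_x(\LLop)\,J_\LLopp(\LLop)$, and taking square roots yields the claim. All the work is in the right inequality, which I would obtain from a \emph{balanced} (symmetric in $x$ and $\LLopp$) inequality together with a symplectic rescaling.

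\emph{Reduction to a balanced inequality.} Let $U_\lambda$ be the unitary dilation $U_\lambda\psi(x)=\lambda^{1/2}\psi(\lambda x)$ and set $\LLop_\lambda=U_\lambda\LLop U_\lambda^{*}$, so that $\sqrt{\LLop_\lambda}=U_\lambda\sqrt{\LLop}\,U_\lambda^{*}$. Since $U_\lambda^{*}xU_\lambda=\lambda x$ and $U_\lambda^{*}\LLopp U_\lambda=\lambda^{-1}\LLopp$, one gets
\[ I_x(\LLop_\lambda)=\lambda^{2}\,I_x(\LLop),\qquad I_\LLopp(\LLop_\lambda)=\lambda^{-2}\,I_\LLopp(\LLop), \]
while $\LLNrm{\LLop_\lambda}{p}=\LLNrm{\LLop}{p}$ and $\LLTr{\LLop_\lambda}=1$ because $U_\lambda$ is unitary. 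Hence it suffices to prove, for every density operator, the balanced inequality
\[ I_x(\LLop)+I_\LLopp(\LLop)\ \ge\ \frac{\LLhb}{4\pi\,C_{1/p'}^{2p'}}\,\LLNrm{\LLop}{p}^{p'}. \]
Applied to $\LLop_\lambda$, its left side becomes $\lambda^{2}I_x(\LLop)+\lambda^{-2}I_\LLopp(\LLop)$, and minimizing over $\lambda>0$ through $\min_{\lambda>0}(\lambda^{2}a+\lambda^{-2}b)=2\sqrt{ab}$ produces exactly the stated product bound. The density-operator hypothesis is essential here: under $\LLop\mapsto c\,\LLop$ the left side scales like $c$ and the right side like $c^{p'}$ with $p'>1$, so no such inequality can hold without fixing the mass $\LLTr{\LLop}=1$.

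\emph{The balanced inequality and the borderline dimension.} The quantity $I_x(\LLop)+I_\LLopp(\LLop)=\tfrac12\LLTr{\LLn{\com{x,\sqrt{\LLop}}}^{2}+\LLn{\com{\LLopp,\sqrt{\LLop}}}^{2}}$ is, up to the Weyl normalization and a power of $\LLhb$, the phase-space Dirichlet energy of $\sqrt{\LLop}$, the commutators $\com{\LLopp,\cdot}$ and $\com{x,\cdot}$ playing the role of $-i\LLhb$ times the derivatives on $\LLRdd$. For $d=1$ the phase space is $\LLRR^{2}$, and this is exactly where Theorem~\ref{thm:dim_general} breaks: since $\dot H^{1}(\LLRR^{2})$ does not embed into $L^{\infty}$, there is no critical exponent, and the critical quantum Sobolev inequality of \cite{lafleche_quantum_2024} must be replaced by its subcritical Gagliardo--Nirenberg form, valid for all $p\in(1,\infty)$. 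I would therefore rerun the proof of \cite[Corollary~2.2]{lafleche_quantum_2024} with the critical embedding replaced by the estimate $\dot H^{1}(\LLRR^{2})\to L^{q}$ expressed through the Riesz potential $(-\Delta)^{-s}$ with $s=1/p'$. On $\LLRR^{2}$ this kernel carries the constant $\bigl(4^{s}\Gamma(s)^{2}\sin(\pi s)\bigr)^{-1}$ after the reflection formula $\Gamma(s)\Gamma(1-s)=\pi/\sin(\pi s)$, whose square root reproduces the factor $2^{-s}\Gamma(s)^{-1}\sin(\pi s)^{-1/2}$ inside $C_{s}$; the remaining $\pi^{(1-s)/2}/\sqrt{s}$ comes from optimizing the subcritical interpolation, and the lower-order norm it generates is precisely the one absorbed by $\LLTr{\LLop}=1$.

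\emph{The two terms in $C_s$ and the main obstacle.} As in the critical case, the quantum Dirichlet energy differs from the classical one: commuting $\sqrt{\LLop}$ past the coherent-state (Husimi) regularization leaves a semiclassical remainder. Bounding this remainder additively is what turns the constant into (Riesz/Sobolev constant) $+$ (semiclassical correction), the correction being the term $(8\pi)^{-s/2}$, in exact analogy with the $+(8\pi)^{-1/2}$ of Theorem~\ref{thm:dim_general}. I expect the difficulty to be twofold: first, proving the subcritical quantum Sobolev inequality with a sharp enough constant in the borderline phase-space dimension $2$, where no scale-invariant inequality exists and the mass constraint must genuinely be used; and second, controlling the semiclassical commutator remainder uniformly in $p$, so that it contributes only the clean additive term $(8\pi)^{-s/2}$ and does not degrade the exponent $p'$ or blow up as $p\to1^{+}$ or $p\to\infty$, where $\sin(\pi s)$ and $\Gamma(s)$ degenerate.
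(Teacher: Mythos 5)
Your overall architecture is the right one and matches the paper's: the left inequality is indeed immediate from $I_K(\LLop)\le\tfrac14 J_K(\LLop)$, and the right inequality is obtained by combining a symmetric ``balanced'' estimate with the symplectic rescaling $x\mapsto\lambda x$, $\LLopp\mapsto\LLopp/\lambda$ and the normalization $\LLTr{\LLop}=1$; your reduction to $I_x(\LLop)+I_\LLopp(\LLop)\ge\frac{\LLhb}{4\pi C_{1/p'}^{2p'}}\LLNrm{\LLop}{p}^{p'}$ is exactly equivalent to the paper's step of applying its inequality to $\sqrt{\LLop}(\lambda x,\lambda y)$ and optimizing in $\lambda$. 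The problem is that the balanced inequality itself --- the entire analytic content of the theorem --- is not proved: you describe a program (``rerun the proof of Corollary~2.2 with the critical embedding replaced by a subcritical Gagliardo--Nirenberg form'') and explicitly flag as open the two points on which it hinges, namely establishing the subcritical quantum Sobolev inequality in phase-space dimension $2$ and controlling the semiclassical remainder uniformly in $p$. As written, this is a plausible plan, not a proof.

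The paper's proof shows that neither of the difficulties you anticipate actually arises, because the route is slightly different from the one you sketch. You do not need a new subcritical, mass-constrained inequality ``where no scale-invariant inequality exists'': the input is the already-established, scale-invariant \emph{fractional} quantum Sobolev inequality of \cite{lafleche_quantum_2024} at order $s=1-2/q\in(0,1)$, which in phase-space dimension $2$ reads $\LLNrm{\LLop}{\LLL^q}\le C_s\LLNrm{f_\LLop}{\dot H^s(\LLRR^2)}$ and already carries the explicit constant $C_s$ --- including the semiclassical correction $(8\pi)^{-s/2}$ and the Riesz-potential factor you identify via the reflection formula. The passage to the subcritical form you want is then the elementary interpolation $\LLNrm{g}{\dot H^s}\le\LLNrm{g}{\dot H^1}^{s}\LLNrm{g}{L^2}^{1-s}$ (Fourier transform plus H\"older), applied to $g=f_{\sqrt{\LLop}}$, whose $L^2$ norm equals $h^{1/2}$ by the isometry of the Wigner transform and $\LLTr{\LLop}=1$; the $\dot H^1$ norm converts into $\frac{4\pi}{\LLhb}\left(I_x(\LLop)+I_\LLopp(\LLop)\right)$. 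In particular there is no semiclassical commutator remainder to control uniformly in $p$ and no new sharp constant to prove: all the $p$-dependence sits harmlessly inside the cited constant $C_s$. To turn your proposal into a proof you would need to either carry out your program (essentially re-deriving the fractional inequality) or, as the paper does, simply quote it and supply the one-line interpolation step.
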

	
\section{Link with Quantum Sobolev Inequalities}\label{sec:sobolev}
	
	The proof of Theorem~\ref{thm:dim_general} is based on semiclassical approximation techniques, since it is a special case of the quantum Sobolev inequalities derived in~\cite{lafleche_quantum_2024}. We also refer to~\cite{ruzhansky_sobolev_2025}, where these inequalities were recently proved in the closely related setting of noncommutative Euclidean spaces.
	
	According to the correspondence principle of quantum mechanics, the following Table~\ref{tab:dict} of analogies allows one to pass from quantum to classical mechanics.
	\begin{table}[h] 
		\caption{The quantum--classical dictionary}\label{tab:dict}
		\resizebox{\textwidth}{!}{%
		\begin{tabular}{ccccc}
			\hline\noalign{\smallskip}
			&\hspace{5pt} & Classical &\hspace{5pt}& Quantum  \\
			\noalign{\smallskip}\svhline\noalign{\smallskip}
			Phase space variables  &&  $(x,v)\in\LLRdd$ && $(x,\LLopp) = (x,-i\LLhb\nabla)$
			\\
			Phase space density  && Probability distribution $f(x,v)$ && density operator $\LLop$
			\\
			Expectation && $\LLintdd f(x,v) \LLd x\LLd v$ && $\LLhd \LLTr{\LLop}$
			\\
			Lie Brackets && $\{g,f\} = \nabla_x g\cdot\nabla_v f - \nabla_v g\cdot\nabla_x f$ && Commutator $\frac{1}{i\LLhb}\com{A,B} = \frac{1}{i\LLhb}\left(AB - BA\right)$
			\\
			\noalign{\smallskip}\hline\noalign{\smallskip}
		\end{tabular}}
	\end{table}
	In particular, $\LLDhx{\LLop} := \frac{1}{i\LLhb}\com{x,\LLop}$ corresponds to the quantum analogue of $\nabla_x f$ and $\LLDhv{\LLop} := \com{\nabla,\LLop}$ is the quantum analogue of $\nabla_v f$. Therefore, the analogues of the skew information $I_x(\LLop)$ and $I_\LLopp(\LLop)$ are, up to multiplicative constants, $\LLbNrm{\nabla_v \sqrt{f}}{L^2(\LLRdd)}^2$ and $\LLbNrm{\nabla_x \sqrt{f}}{L^2(\LLRdd)}^2$, which correspond to the classical Fisher information with respect to $x$ and $v$. On the other hand, classical Sobolev inequalities tell us that for $d>1$,
	\begin{equation*}
		\LLNrm{g}{L^\frac{2d}{d-1}(\LLRdd)} \leq \LLcC^\textnormal{S}_{1,2} \LLNrm{g}{\dot H^1(\LLRdd)} = \LLcC^\textnormal{S}_{1,2} \, \Big(\LLNrm{\nabla_x g}{L^2(\LLRdd)}^2 + \LLNrm{\nabla_v g}{L^2(\LLRdd)}^2\Big)^{1/2}.
	\end{equation*}
	Taking $g(x,v) = \sqrt{f(\lambda x,v/\lambda)}$ in the above inequality, optimizing with respect to $\lambda$ and squaring the inequality gives
	\begin{equation*}
		2\, (\LLcC^\textnormal{S}_{1,2})^2 \, \LLbNrm{\nabla_x \sqrt{f}}{L^2(\LLRdd)} \, \LLbNrm{\nabla_v \sqrt{f}}{L^2(\LLRdd)} \geq \LLbNrm{f}{L^\frac{d}{d-1}} .
	\end{equation*}
	
	In the quantum setting, the analogue of Sobolev inequalities read for any compact operator $\LLop$ in $L^2(\LLRd)$
	\begin{equation*}
		\LLNrm{\LLop}{\LLL^q} \leq C_{d,s,p} \LLNrm{\LLDh\LLop}{\LLL^p}
	\end{equation*}
	where the scaled Schatten norms $\LLNrm{\cdot}{\LLL^p}$ and the quantum gradient $\LLDh$ were defined according to the dictionary by
	\begin{equation*}
		\LLNrm{\LLop}{\LLL^p} := h^{d/p} \LLNrm{\LLop}{p} \quad \text{ and } \quad \LLDh\LLop := \left(\LLDhx{\LLop},\LLDhv{\LLop}\right) ,
	\end{equation*}
	so that $\LLn{\LLDh\LLop}^2 = \LLn{\LLDhx{\LLop}}^2+\LLn{\LLDhv{\LLop}}^2$. The case of fractional Sobolev spaces is treated as well in~\cite{lafleche_quantum_2024}.
	
	A useful tool for the proof is the Wigner transform, which is the inverse of the Weyl quantization, and is defined in terms of the integral kernel $\LLop(x,y)$ of the operator $\LLop$ by
	\begin{equation*}
		f_\LLop(x,v) = \LLintd e^{-iy\cdot v/\LLhb} \, \LLop(x+\tfrac{y}{2}, x-\tfrac{y}{2}) \LLd y \, .
	\end{equation*}
	It is an isometry from $\LLL^2$ to $L^2(\LLRdd)$ and satisfies $\nabla f_\LLop = f_{\LLDh \LLop}$.
	
	\begin{proof}[Proof of Theorem~\ref{thm:dim_1}]
		When $d=1$, the H\"older inequality for the trace and the fractional quantum Sobolev inequality~\cite{lafleche_quantum_2024} together with the fact that the Wigner transform is an isometry from $\LLL^2$ to $L^2(\LLRR^2)$ tell us that for any $q\in[2,\infty)$,
		\begin{equation*}
			\LLNrm{\LLop}{\LLL^q} \leq C_s \LLNrm{f_\LLop}{\dot H^s(\LLRR^2)}
		\end{equation*}
		with $s = 1-2/q$ and $C_s = \frac{1}{\left(8\pi\right)^{s/2}} + \frac{1}{2^s\,\pi^{s/2}} \frac{\Gamma(1-s)^{1/2}}{\Gamma(1+s)^{1/2}} =  \frac{1}{\left(8\pi\right)^{s/2}} + \frac{\pi^{\left(1-s\right)/2}}{2^s\sin(\pi s)^{1/2}\Gamma(s)\sqrt{s}}$ using the reflection formula for the Gamma function. By interpolation of $\dot H^s$ spaces, or, in a more elementary way, using the Fourier transform and H\"older's inequality, it gives
		\begin{equation*}
			C_s^{-1}\LLNrm{\LLop}{\LLL^q} \leq \LLNrm{f_\LLop}{\dot H^1(\LLRR^2)}^s \LLNrm{f_\LLop}{L^2(\LLRR^2)}^{1-s} = \left(\LLNrm{\LLDhx\LLop}{\LLL^2}^2 + \LLNrm{\LLDhv\LLop}{\LLL^2}^2\right)^\frac{s}{2} \LLNrm{\LLop}{\LLL^2}^{2/q} .
		\end{equation*}
		Applying this inequality to $\sqrt{\LLop}(\lambda x,\lambda y)$ and optimizing with respect to $\lambda$ leads to
		\begin{equation*}
			C_s^{-2}\LLNrm{\LLop}{\LLL^{q/2}} \leq 2^s \LLNrm{\LLDhx\sqrt{\LLop}}{\LLL^2}^s \LLNrm{\LLDhv\sqrt{\LLop}}{\LLL^2}^s \LLNrm{\LLop}{\LLL^1}^{2/q} 
		\end{equation*}
		which gives with $p = q/2$
		\begin{equation*}
			\frac{\LLhb}{4\pi C_{1/p'}^{2p'}} \LLNrm{\LLop}{p}^{p'} \leq \LLNrm{\com{\LLopp,\sqrt{\LLop}}}{2} \LLNrm{\com{x,\sqrt{\LLop}}}{2} \LLNrm{\LLop}{1}^{1/(p-1)} .
		\end{equation*}
		This gives the result by definition of the Wigner--Yanase skew information and since $\LLNrm{\LLop}{1} = 1$.
	\end{proof}

\section{Open Questions}\label{sec:quest}

	We give here some remaining open questions concerning the Wigner--Yanase skew information and quantum Sobolev spaces.
	
\subsection{Uncertainty Inequality}\label{sec:quest_uncertainty}

	Concerning the uncertainty inequality for the Wigner--Yanase skew information, the above results only address the case $A=x$ and $B = \LLopp$. A natural question is whether more general inequalities can be obtained for two non-commuting operators $A$ and $B$. In dimension $d=1$, in the classical commutative case, if $z\mapsto \left(\alpha(z),\beta(z)\right)$ is a $C^1$-diffeomorphism and $f\in L^1(\LLRR^2)$ is a probability density such that $f\circ \Phi^{-1}$ converges to $0$ at infinity,
	\begin{equation}\label{eq:classical_uncertainty_1d}
		\frac{1}{2\,(\LLcC^\textnormal{S}_{1,1})^2} \LLNrm{\LLn{\{\alpha,\beta\}}^\frac{1}{2}f}{L^2(\LLRR^2)}^2 \leq \LLNrm{\{\alpha,\sqrt{f}\}}{L^2(\LLRR^2)} \LLNrm{\{\beta,\sqrt{f}\}}{L^2(\LLRR^2)} ,
	\end{equation}
	where the brackets denote the Poisson brackets.	Hence, in the operator setting with $d=1$, one would expect the existence of a constant $C>0$ such that for any operators $A$ and $B$ associated with functions $\alpha$ and $\beta$ inducing a $C^1$-diffeomorphism, and any density operators $\LLop$ with some condition of compatibility with $A$ and $B$,
	\begin{equation*}
		C \LLNrm{\LLn{\com{A,B}}^\frac{1}{2}\LLop}{\LLL^2}^2 \leq \LLNrm{\com{A,\sqrt{\LLop}}}{\LLL^2} \LLNrm{\com{B,\sqrt{\LLop}}}{\LLL^2} .
	\end{equation*}
	That is, one would expect that for a rather large class of operators,
	\begin{equation*}	
		\frac{1}{4}\sqrt{J_A(\LLop)\,J_B(\LLop)} \geq \sqrt{I_A(\LLop)\,I_B(\LLop)} \geq C \, \LLTr{\LLn{\com{A,B}}\LLop^2}\, . 
	\end{equation*}
	
	\begin{proof}[Proof of Inequality~\eqref{eq:classical_uncertainty_1d}]
		Observe that under a change of variable induced by a $C^1$-diffeomorphism $\Phi:\LLRR^2\to\LLRR^2$, the Sobolev inequality gives for $p<2$ and $\frac{1}{q} = \frac{1}{p} - \frac{1}{2}$
		\begin{equation*}
			\LLNrm{\LLn{\det\nabla\Phi}^{1/q}f}{L^q(\LLRR^2)} \leq \LLcC^\textnormal{S}_{1,p} \LLNrm{\LLn{\det\nabla\Phi}^{1/p}(\nabla\Phi)^{-1}\nabla f}{L^p} .
		\end{equation*}
		In the case when $\Phi(z) = \left(\alpha(z),\beta(z)\right)$, notice that $\det\nabla\Phi = \{\alpha,\beta\}$, and $(\nabla\Phi)^{-1} \nabla g = \{\alpha,\beta\}^{-1}\left(-\{\beta,g\},\{\alpha,g\}\right)$. Therefore,
		\begin{equation*}
			\frac{1}{\LLcC^\textnormal{S}_{1,p}}\LLNrm{\LLn{\{\alpha,\beta\}}^\frac{1}{q}f}{L^q(\LLRR^2)} \leq \LLNrm{\LLn{\{\alpha,\beta\}}^{2/p}\{\alpha,\beta\}^{-2}\left(\LLn{\{\beta,f\}}^2+\LLn{\{\alpha,f\}}^2\right)}{L^{p/2}(\LLRR^2)}^{1/2} .
		\end{equation*}
		In particular, if $p = 1$ and $q = 2$, it gives
		\begin{equation*}
			\frac{1}{\LLcC^\textnormal{S}_{1,1}}\LLNrm{\LLn{\{\alpha,\beta\}}^\frac{1}{2}f}{L^2(\LLRR^2)}
			\\
			\leq \LLNrm{(\{\alpha,f\},\{\beta,f\})}{L^1(\LLRR^2)} .
		\end{equation*}
		Using the triangle inequality, changing $\alpha$ by $\lambda \alpha$ and $\beta$ by $\beta/\lambda$, and optimizing with respect to $\lambda$ gives
		\begin{equation*}
			\frac{1}{2\,(\LLcC^\textnormal{S}_{1,1})^2} \LLNrm{\LLn{\{\alpha,\beta\}}^\frac{1}{2}f}{L^2(\LLRR^2)}^2 \leq \LLNrm{\{\alpha,f\}}{L^1(\LLRR^2)} \LLNrm{\{\beta,f\}}{L^1(\LLRR^2)} .
		\end{equation*}
		Since $\{\alpha,f\} = \{\alpha,\sqrt{f}\}\sqrt{f}$, the result follows from the Cauchy--Schwarz inequality and the fact that $\LLNrm{f}{L^1(\LLRR^2)} = 1$.
	\end{proof}
	
\subsection{Bound for Commutators}\label{sec:quest_commutators}

	The quantum Sobolev spaces are useful to prove that commutators are small in terms of $\LLhb$. It is for example proved in~\cite{lafleche_quantum_2024} that for any $p\in[1,\infty]$
	\begin{equation*}
		\frac{1}{\sqrt{d}} \LLNrm{\LLDhv\LLop}{\LLL^p} \leq \sup_{\xi\in\LLRd\setminus\{0\}} \frac{\LLNrm{\com{e^{i\,\xi\cdot x}, \LLop}}{\LLL^p}}{\LLhb\LLn{\xi}} \leq \LLNrm{\LLDhv\LLop}{\LLL^p}.
	\end{equation*}
	It also follows for instance from \cite{potapov_operator-lipschitz_2011} that for any $p\in (1,\infty)$, there exists $c_p > 0$ such that for any $u \in W^{1,\infty}(\LLRd)$,
	\begin{equation}\label{eq:operator-lipschitz}
		\LLNrm{\com{u(x),\LLop}}{\LLL^p} \leq c_p\, \LLhb \LLNrm{\nabla u}{L^\infty} \LLNrm{\LLDhv\LLop}{\LLL^p}.
	\end{equation}
	By conjugation by the Fourier transform, which is a unitary operator, one obtains similarly
	\begin{equation*}
		\LLNrm{\com{u(\nabla),\LLop}}{\LLL^p} \leq c_p \LLNrm{\nabla u}{L^\infty} \LLNrm{\LLDhx\LLop}{\LLL^p}.
	\end{equation*}
	The cases $p\in\set{1,\infty}$ are however known to be false, one needs slightly more regularity for $u$ (see e.g.~\cite{aleksandrov_operator_2016}).
	
	These two inequalities are special cases of analogues to the classical H\"older's inequality, which gives for $1\leq p,q,r\leq \infty$ such that $\frac{1}{p} = \frac{1}{q} + \frac{1}{r}$ and for two functions $(\alpha,\beta) \in W^{1,q}(\LLRdd)\times W^{1,r}(\LLRdd)$ 
	\begin{equation*}
		\LLNrm{\{\alpha,\beta\}}{L^p} \leq \LLNrm{\nabla_x\alpha}{L^q} \LLNrm{\nabla_v\beta}{L^r} + \LLNrm{\nabla_v\alpha}{L^q} \LLNrm{\nabla_x\beta}{L^r} .
	\end{equation*}
	Replacing $\alpha$ by $\lambda\alpha$ and $\beta$ by $\lambda\beta$ and optimizing with respect to $\lambda$ gives
	\begin{equation*}
		\LLNrm{\{\alpha,\beta\}}{L^p}^2 \leq 2 \LLNrm{\nabla_x\alpha}{L^q} \LLNrm{\nabla_v\alpha}{L^q} \LLNrm{\nabla_v\beta}{L^r}  \LLNrm{\nabla_x\beta}{L^r}.
	\end{equation*}
	This implies in particular that $\LLNrm{\{\alpha,\beta\}}{L^p} \leq 2 \LLNrm{\nabla_z\alpha}{L^q} \LLNrm{\nabla_z\beta}{L^r}$, where $\nabla_z = (\nabla_x,\nabla_v)$. It would be useful to know whether the quantum generalization of such inequality holds, that is, if for $1<p,q,r<\infty$ such that $\frac{1}{p} = \frac{1}{q} + \frac{1}{r}$, there exists a constant $C$ independent of $\LLhb$ such that for any compact operators $A$ and $B$,
	\begin{equation*}
		\LLNrm{\com{A,B}}{\LLL^p} \leq C \,\LLhb \LLNrm{\LLDh A}{\LLL^q} \LLNrm{\LLDh B}{\LLL^r}.
	\end{equation*}
	This would generalize the ``commutator-Lipschitz" inequality~\eqref{eq:operator-lipschitz}, which is the case when $A = u(x)$ is a multiplication operator and $q=\infty$.
	
	One can already observe that the following weaker inequality holds for $n > d$
	\begin{equation}\label{eq:Holder_gradients}
		\LLNrm{\com{A,B}}{\LLL^p} \leq C_{d,n} \LLNrm{\LLDh^{n+1}\! A}{\LLL^2}^{d/n} \LLNrm{\LLDh A}{\LLL^2}^{1-d/n} \LLNrm{\LLDh B}{\LLL^p} .
	\end{equation}
	
	\begin{proof}[Proof of Inequality~\eqref{eq:Holder_gradients}]
		By the formula for the Weyl quantization, with $z=(y,v)$ and $e_z := e^{2i\pi \left(x\cdot y+v\cdot \LLopp\right)}$,
		\begin{equation*}
			A = \LLintdd \widehat{f}_A(z) \,e_z \LLd z \, .
		\end{equation*}
		Therefore by the triangle inequality
		\begin{equation*}
			\LLNrm{\com{A,B}}{\LLL^p} \leq \LLintdd \big|\widehat{\nabla f_A}(z)\big|\, \frac{\LLNrm{\com{e_z,B}}{\LLL^p}}{\LLn{2\pi z}} \LLd z \leq \LLbNrm{\widehat{f}_{\LLDh A}}{L^1(\LLRdd)} \sup_{z\neq 0} \frac{\LLNrm{\com{e_z,B}}{\LLL^p}}{\LLn{2\pi z}}.
		\end{equation*}
		By the same argument as in~\cite[Proof of Proposition~2.3]{lafleche_quantum_2024}, the last factor is bounded by $\LLNrm{\LLDh B}{\LLL^p}$. Therefore the result follows using the classical bound\footnote{The Sobolev space $H^n(\LLRdd)$ could also be replaced by the sharper Besov space $B^d_{2,1}(\LLRdd)$.} $\LLsNrm{\widehat{f}}{L^1(\LLRdd)} \leq C \LLNrm{f}{H^n(\LLRdd)}$, or equivalently, by a scaling argument,
		\begin{equation*}
			\LLsNrm{\widehat{f}}{L^1(\LLRdd)} \leq C \LLNrm{f}{L^2(\LLRdd)}^{1-d/n} \LLNrm{\nabla^n f}{L^2(\LLRdd)}^{d/n} ,
		\end{equation*}
		applied to $f = f_{\LLDh A}$, together with the fact that the Wigner transform is an isometry from $\LLL^2$ to $L^2(\LLRdd)$.
	\end{proof}


\begin{thebibliography} {99}
\bibitem{aleksandrov_operator_2016}
Aleksandrov, A. B., Peller, V. V.: Operator Lipschitz functions. Russian Mathematical Surveys \textbf{71}(4), 605--702 (2016).

\bibitem{braunstein_statistical_1994}
Braunstein, S. L., Caves, C. M.: Statistical Distance and the Geometry of Quantum States. Physical Review Letters \textbf{72}(22), 3439--3443 (1994).

\bibitem{frowis_tighter_2015}
Fröwis, F., Schmied, R., Gisin, N.: Tighter Quantum Uncertainty Relations Following from a General Probabilistic Bound. Physical Review A \textbf{92}(1), 012102 (2015).

\bibitem{hayashi_two_2002}
Hayashi, M.: Two Quantum Analogues of Fisher Information from a Large Deviation Viewpoint of Quantum Estimation. Journal of Physics A: Mathematical and General \textbf{35}(35), 7689--7727 (2002).

\bibitem{helstrom_minimum_1967}
Helstrom, C. W.: Minimum Mean-Squared Error of Estimates in Quantum Statistics. Physics Letters A \textbf{25}(2), 101--102 (1967).

\bibitem{holevo_probabilistic_2011}
Holevo, A.: Probabilistic and Statistical Aspects of Quantum Theory, 323 pages. Edizioni della Normale, Pisa (2011).

\bibitem{kosaki_matrix_2005}
Kosaki, H.: Matrix Trace Inequalities Related to Uncertainty Principle. International Journal of Mathematics \textbf{16}(6), 629--645 (2005).

\bibitem{lafleche_quantum_2024}
Lafleche, L.: On Quantum Sobolev Inequalities. Journal of Functional Analysis \textbf{286}(10), 110400 (2024).

\bibitem{luo_heisenberg_2005}
Luo, S.: Heisenberg Uncertainty Relation for Mixed States. Physical Review A \textbf{72}(4), 042110 (2005).

\bibitem{luo_correction_2005}
Luo, S., Zhang, Q.: Correction to "On Skew Information". IEEE Transactions on Information Theory \textbf{51}(12), 4432--4432 (2005).

\bibitem{luo_informational_2004}
Luo, S., Zhang, Z.: An Informational Characterization of Schrödinger's Uncertainty Relations. Journal of Statistical Physics \textbf{114}(5), 1557--1576 (2004).

\bibitem{potapov_operator-lipschitz_2011}
Potapov, D., Sukochev, F.: Operator-Lipschitz Functions in Schatten--von Neumann Classes. Acta Mathematica \textbf{207}(2), 375--389 (2011).

\bibitem{ruzhansky_sobolev_2025}
Ruzhansky, M., Shaimardan, S., Tulenov, K.: Sobolev Inequality and its Applications to Nonlinear PDE on Noncommutative Euclidean Spaces. Journal of Functional Analysis \textbf{289}(4), 111143 (2025).

\bibitem{wigner_information_1963}
Wigner, E. P., Yanase, M. M.: Information Contents of Distributions. Proceedings of the National Academy of Sciences of the United States of America \textbf{49}(6), 910--918 (1963).

\bibitem{yanagi_wigneryanasedyson_2010}
Yanagi, K.: Wigner--Yanase--Dyson Skew Information and Uncertainty Relation. Journal of Physics: Conference Series \textbf{201}(1), 012015 (2010).

\bibitem{yang_generalized_2022}
Yang, M.-C., Qiao, C.-F.: Generalized Wigner--Yanase Skew Information and the Affiliated Inequality. Physical Review A \textbf{106}(5), 052401 (2022).

\end{thebibliography}
\end{document}